\newtheorem{theorem}{Theorem}
\newtheorem{lemma}{Lemma}
\newtheorem{definition}{Definition}
\newtheorem{remark}{Remark}
\newenvironment{proof}
        {\noindent {\em Proof.}~~~} 
        {\begin{flushright}$\Box$\end{flushright}}
\title{Sharp utilization thresholds for some real-time scheduling problems}
\author{
	Sathish Gopalakrishnan \\
	Department of Electrical and Computer Engineering \\
	The University of British Columbia
}
\date{}
\begin{document}

\maketitle

\setlength{\baselineskip}{1\baselineskip}


\begin{abstract}

Scheduling policies for real-time systems exhibit threshold behavior that is related to the utilization of the task set they schedule, and in some cases this threshold is sharp. For the rate monotonic scheduling policy, we show that periodic workload with utilization less than a threshold $U_{RM}^{*}$ can be scheduled almost surely and that all workload with utilization greater than $U_{RM}^{*}$ is almost surely not schedulable. We study such sharp threshold behavior in the context of processor scheduling using static task priorities, not only for periodic real-time tasks but for aperiodic real-time tasks as well. The notion of a utilization threshold provides a simple schedulability test for most real-time applications. These results improve our understanding of scheduling policies and provide an interesting characterization of the typical behavior of policies. The threshold is sharp (small deviations around the threshold cause schedulability, as a property, to appear or disappear) for most policies; this is a happy consequence that can be used to address the limitations of existing utilization-based tests for schedulability. We demonstrate the use of such an approach for balancing power consumption with the need to meet deadlines in web servers.

\end{abstract}


\section{Introduction}

Computing systems have become larger in scale and more pervasive in their applications. The constant interaction between embedded computing systems and the physical world requires a notion of predictable behavior from the deployed computing systems. Even in large-scale computing clusters and server farms there is a growing emphasis on providing service guarantees. This need for predictable operation can often be characterized by a need for timely completion of activities. Tasks can usually be associated with deadlines; systems need to ensure that the tasks meet their deadlines.

In a sense, the convergence of computation, communication and control, which is often seen in distributed embedded systems, has led to a renewed interest in understanding the conditions for a system to meet deadlines. Additionally, most tasks are recurring: they need to be performed repeatedly because of the constant interaction with the physical environment (or because of user demand). Such problems have been at the heart of real-time scheduling since the seminal work by Liu and Layland~\cite{LL:73} on utilization bounds for schedulability using static and dynamic priority scheduling policies.

Liu and Layland considered a set of periodic tasks with known execution times and periods that need to be scheduled on a uniprocessor system. Each task $\tau_i$ was characterized by its execution time $c_i$ and its period $P_i$. In the periodic task model, if an instance of task $\tau_i$ is eligible for execution at time $t$, the next instance of the same task is eligible for execution at time $t+P_i$. Each instance of a task is called a {\em job}. Liu and Layland restricted their analysis to task sets where each job needs to complete before the next job belonging to the same task is ready for execution. For task $\tau_i$, each job needs to complete execution within $P_i$ time units after its release. Hence $P_i$ is known as the {\em relative deadline} for the task. It is easy to see that each task will use the processor for a $\frac{c_i}{P_i}$ fraction of time. This fraction is the utilization of task $\tau_i$ and can be denoted $u_i$. The utilization of a task set, therefore, is $U = \sum_{i=1}^{n} u_i$ where $n$ is the number of tasks.

The fundamental contribution that Liu and Layland made was to show that for a specific scheduling policy $\zeta$ -- they studied the Rate Monotone policy and the Earliest Deadline First policy -- there exists a utilization bound $U_\zeta$ such that any task set with utilization $U < U_\zeta$ is definitely schedulable (all deadlines will be met). This has formed the basis for much work in real-time systems.

There are, however, some obvious limitations to Liu and Layland's result. The first drawback is that the utilization bound test is pessimistic: there are many task sets that may exceed the bound but
are still schedulable. Second, for models when the relative deadline does not equal the period, additional tests are needed. Lastly, obtaining the utilization bound is difficult for many policies
because such derivations involve identifying the worst-case task set (the task set with low utilization that is not schedulable) and this is non-trivial for certain policies. 

In contrast with prior work on schedulability and predictability, we show that the rate monotonic scheduling policy has a utilization threshold $U_{RM}^*$ such that any task set with utilization less than $U_{RM}^*$ is almost surely schedulable and a task set with utilization greater than $U_{RM}^*$ is almost surely not schedulable. Similarly, we show that such a threshold exists for deadline monotonic scheduling of aperiodic real-time tasks. Establishing the sharpness of utilization thresholds provides a better understanding of scheduling policies and removes most of the pessimism that is associated with traditional utilization bounds because of the implication that task sets with utilization greater than $U^*$ are unlikely to be schedulable. These results are independent of the relationship between task periods and task deadlines. On the other hand, it is prudent to note that these results indicate that schedulability appears and disappears {\em almost surely}. For hard real-time systems, which cannot afford to miss any deadlines, this suggests that the threshold can be used as an initial estimate and schedulability needs to be verified by an exact test at some step. For soft real-time systems, which can tolerate some deadline misses, our results provide a simple test and a tight performance guarantee.

As an example, consider rate monotonic scheduling with the Liu and Layland task model. We would like to show that when $n$, the number of tasks to schedule, is large, almost surely task sets of utilization less than about $0.80$ utilization are schedulable and almost surely task sets with greater utilization are unschedulable. This shows that the average performance of the rate monotonic policy is much better than the Liu and Layland worst-case utilization of $0.69$~\cite{LL:73}. It is exactly for the case of large task sets that other analysis techniques become computationally expensive. The fundamental contribution we make is a framework for answering questions about average or typical case schedulability. To date there has been no unified methodology that can deal with all scheduling policies.

In this article, our emphasis is on rate monotonic scheduling for periodic tasks and deadline monotonic scheduling for aperiodic tasks on a uniprocessor although some preliminary experiments lead us to believe that these results should hold for multiprocessor and distributed (multistage) systems as well.

{\bf Motivation.} The main reason for studying sharp thresholds is to ease resource provisioning for soft real-time systems, and, in some cases, simplify the offline optimization of hard real-time systems. The existence of sharp thresholds allows us to make efficient use of computing resources. Many mainstream operating systems (especially Linux) support simple fixed-priority scheduling and being able to identify a workload limit for such systems allows for simple admission control and resource management. Many applications have tasks with deadlines but are built to tolerate a few deadline misses. Multimedia applications have been traditional examples, but many emerging pervasive computing applications are of a similar nature. Timely response leads to high quality of service but occasional delays are not catastrophic. For these systems, being able to utilize resources better can lead to substantial cost savings that will allow these applications to achieve greater market penetration. It can be argued that feedback control~\cite{LSTS:02} can keep soft real-time systems in an acceptable operation regime but such techniques may require substantial modifications to operating systems and/or middleware platforms. Additionally, our findings may allow feedback control schemes to pick better set points.

In the next section (Section~\ref{sec:model}), we elaborate on the model for periodic real-time tasks that we consider and the notation we will follow. Then, in Sections~\ref{sec:thresholds}, \ref{sec:graph-property} and \ref{sec:sharpness}, we will develop the framework for reasoning about average-case behavior and present proofs of our key results. We will then followup with experimental evidence and discussion of the results (Section~\ref{sec:expts}). We then extend these results to the aperiodic task model and demonstrate the use of sharp threshold behavior in power control for web servers (Section~\ref{sec:aperiodic}). Finally, we place our work in context with related work (Section~\ref{sec:relatedwork}) and conclude the article (Section~\ref{sec:conclusions}).

\section{System and task models}
\label{sec:model}

We consider a general and well-understood model for uniprocessor scheduling. 

{\bf Platform model.} We consider a uniprocessor system that can schedule tasks using static priorities and preempt (suspend execution of) tasks to schedule tasks with higher priority.

{\bf Task model.}
Each task $\tau_i$ is periodic with period $P_{i}$. Each instance of the task has an execution time requirement $c_{i}$ on the processor and a {\em relative deadline} $D_{i} = P_i$. If a job of $\tau_i$ is released (ready for execution) at at time $t$ then it is expected to finish execution by time $t+P_i$. Tasks are independent of each other.

The typical assumption is that the first instance of all periodic tasks release at the same instant in time. A reason for making this assumption is that this represents the worst-case situation for static priority policies. We will also make this assumption although it is not strictly necessary.

The utilization of a periodic task set is \[ U := \sum_{i}\frac{c_{i}}{P_{i}}. \]

{\bf Monotone scheduling policies.} In this article, we will mostly be concerned with the rate monotonic and deadline monotonic scheduling policies, which are work-conserving (non-idling) policies. It is also useful to keep in mind a more general classification of policies: the class of monotone policies. Let us suppose that a scheduling policy successfully schedules a set of tasks $\Gamma=\{\tau_{i}\}$. We will call the policy a {\em monotone scheduling policy}\footnote{Note that there is a distinction between this notion of monotonicity and the use of the term ``monotone'' in the context of rate/deadline monotone priority policy. However, by this definition, the rate monotonic scheduling policy and the deadline monotonic scheduling policy are monotone scheduling policies.} if and only if:
\begin{itemize}
	\item It can schedule any set $\Delta \subset \Gamma$ successfully;
	\item For any task $\tau_{i} \in \Gamma$, the policy can schedule all tasks successfully if $c_{i}$ were to be reduced;
	\item For any task $\tau_{i} \in \Gamma$, the policy can schedule all jobs successfully if $P_{i}$ were increased.
\end{itemize}

\section{Utilization thresholds}
\label{sec:thresholds}

Let $\Gamma$ be some task set. We define $\nu(U,\Gamma)$ as the probability of selecting task set $\Gamma$ from all possible task sets of utilization $U$. Let $S_n$ represent the set of all schedulable task sets with $n$ tasks. Then $\mu(U,S_n) := \sum_{\Gamma \in S_{n}} \nu(U,\Gamma)$ represents the probability that a task set with utilization $U_{n}$ is schedulable using the rate monotonic policy. This can also be stated in the following manner. Suppose $\Gamma$, a task set with $n$ tasks, is drawn at uniformly at random from the space of all possible task sets of utilization $U_{n}$. Then, $\mu(U,S_n)$ is the probability of the event ``$\Gamma \in S_n$.''

\begin{definition}[Threshold] $U^{*}_n$ is said to be a threshold for $S_n$ if for any $U$
\begin{equation}
	\lim_{n \rightarrow \infty} \mu(U,S_n) = \left\{ \begin{array}{ll}
												0 & \textrm{if } U \gg U^{*}_{n},\\
												1 & \textrm{if } U \ll U^{*}_{n}.
												\end{array}
											   \right.
\end{equation}
\end{definition}
Note that $f \ll g$ means $f/g \rightarrow 0$.

The definition of threshold may appear trivial in the case of scheduling policies (clearly utilization of $0$ is schedulable, and utilization $> 1$ is unschedulable) and hence we require a stronger criterion for a useful threshold.

\begin{definition}[Sharp threshold] A threshold is said to be sharp if there exists a $U^{*}_n$ such that for every $\epsilon > 0$ and any $U$
\begin{equation}
	\lim_{n \rightarrow \infty} \mu(U,S_n) = \left\{ \begin{array}{ll}
												0 & \textrm{if } U > (1+\epsilon)U^{*}_{n},\\
												1 & \textrm{if } U < (1-\epsilon)U^{*}_{n}.
												\end{array}
											   \right.
\end{equation}
\end{definition}

The interval of width $2\epsilon$ over which the probability of finding a valid schedule drops from $1$ to $0$ is called the {\em threshold interval}. As $n \rightarrow \infty$, the threshold interval becomes arbitrarily small and we have a sharp threshold.

A threshold that is not sharp is a coarse threshold. Sharp thresholds represent phase transition phenomena because we can divide the task set space into two phases: one in which the property holds almost always and one in which it almost always does not hold.

We emphasize once more that, although the results are asymptotic, in practice a reasonable number of tasks suffices for observing sharp thresholds. When we think of $n \rightarrow \infty$, we do not conjure up task sets with $1000$s of tasks; we are usually dealing with many $10$s of tasks.

{\em The main result of our work is that schedulability, with the rate monotonic scheduling policy, of periodic tasks has a sharp threshold.}

By proving such a result we provide a platform for the average-case analysis of real-time scheduling problems and highlight the validity of using empirical utilization thresholds for managing resource allocation.

\section{Schedulability as a graph property}
\label{sec:graph-property}

To show that scheduling problems of the type that we are interested in have a sharp threshold we will gain leverage from work carried out in the context of random graphs. The study of phase transitions can be traced back to the work of Erd\"os and R\'enyi on random graphs~\cite{ER:59,ER:60}. A random graph is a graph with a fixed set of vertices and edges between two given nodes occur with some probability, $p$. Erd\"os and R\'enyi showed that as the parameter controlling the edge probability varies, the random graph system experiences a swift qualitative change. This transition is similar to observations in the physical world. Akin to water freezing abruptly as its temperature drops below zero, the random graph changes rapidly from having many small components to a graph with a giant component that contains a constant proportion of vertices.

We will use results that have been obtained by Friedgut and Bourgain~\cite{Friedgut:99} to prove the existence of a sharp utilization threshold for schedulability. The first step is, of course, to connect the scheduling problem to a problem on graphs.

To consider scheduling as a graph problem, we will first deal with utilization in a quantized fashion. Let $q$ be the smallest quantum of utilization that can be allocated to a task. Each task can have a utilization of at most $1$ therefore there are at most $M = 1/q$ quanta, and $M$ is assumed to be sufficiently large. More specifically, given $n$ tasks, without loss of generality, $M$ can be of the form $(k_{2})^{n}$ for some constant $k_{2} > 2$. If $u_{i}$ is the utilization of task $\tau_{i}$ and $P_{i}$ is the period of the task, then $c_{i} = u_{i}P_{i}$. We can thus represent each periodic task by the tuple $\{P_{i},u_{i}\}$.\footnote{The use of quantized utilization does not limit our analysis in any way; $M$ can be made sufficiently large to approximate real allocations.}

Consider a bipartite graph with the two vertex partitions being $\mathcal{T}$ and $\mathcal{U}$. The vertices in $\mathcal{T}$ represent tasks and each vertex can be labeled by its period. (The periods, as can be expected, are assumed to be chosen uniformly at random from the space of all possible periods.) The set $\mathcal{U}$ contains $M$ vertices, each corresponding to one quantum of utilization. The complete bipartite graph with $\mathcal{T}$ and $\mathcal{U}$ as the two partitions represents a task set with each task having utilization $1$. This is clearly unschedulable for task sets with more than one task. If edges are present with probability $p$ then we have random task sets with an expected utilization of $Mnp$ where $n$ is the number of tasks. By choosing the value of $p$ appropriately, we can generate random task sets with varying utilization levels. There is a graph corresponding to each task set and we will call these graphs {\em task set graphs}. In turn, for each utilization level, there is a corresponding edge probability $p$ (The complete bipartite graph representation is illustrated in Figure~\ref{fig:sch-graph}.). The set of periods is a set of integers. For $n$ tasks, there may be at most $n$ periods chosen from a range of integers. When $n$ is large, we can represent all possible periods using such a graph.

\begin{figure}[htbp]
	\centering
	\scalebox{0.65}{
	\includegraphics{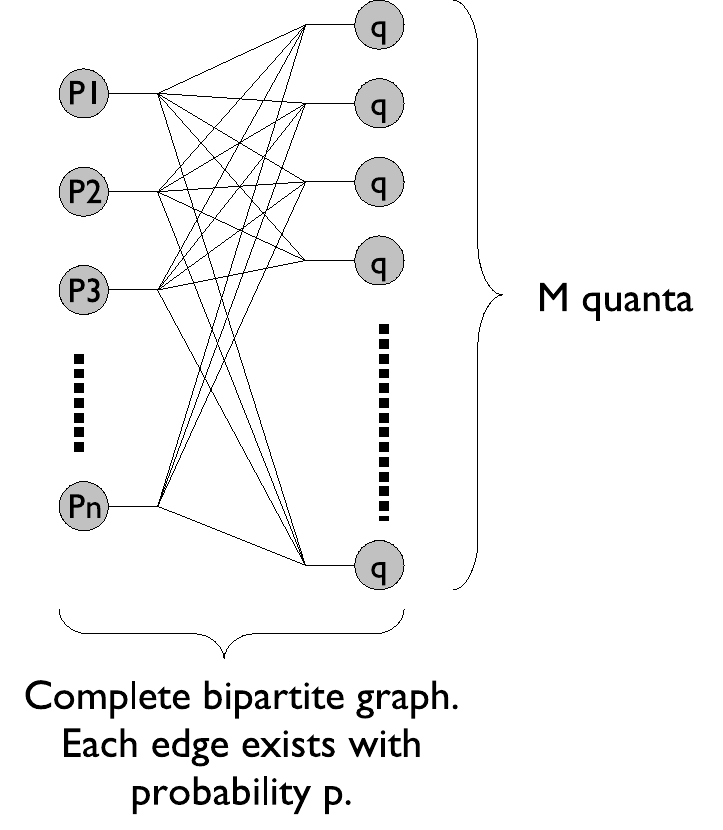}
	}
\caption{Task set graph}
\label{fig:sch-graph}
\end{figure}

Certain combinations of periods and execution times lead to unschedulable task sets under the rate monotonic scheduling policy (Figure~\ref{fig:unsch-graph} depicts a task set of utilization $0.97$ that cannot be scheduled using the rate monotonic policy. This task set has two tasks: one with period $10$ and utilization $0.8$ and another with period $18$ and utilization $0.17$.). This phenomenon is well understood from the initial study by Liu and Layland~\cite{LL:73}. These unschedulable task sets are subgraphs of the complete bipartite task set graph. Increasing $p$ from $0$ to $1$ leads to unschedulable task sets. There is, in fact, a critical edge probability, $p^{*}$, which in turn corresponds to a critical utilization $U^{*}$ for (un)schedulability. For $p < p^{*}$, the expected task set is asymptotically almost surely schedulable; for $p > p^{*}$, the expected task set is asymptotically almost surely not schedulable. The next section details the proof of this sharp threshold behavior.

{\em Remark.} In our description of the graph model, we assumed that edges in the task set graph exist with probability $p$, which would imply that only the average utilization is fixed. We can invert the model by fixing the number of edges. This does not alter the probability of an edge existing between two vertices but ensures that the utilization (and not just the average utilization) is fixed.

\begin{figure}
	\centering
	\scalebox{0.65}{
	\includegraphics{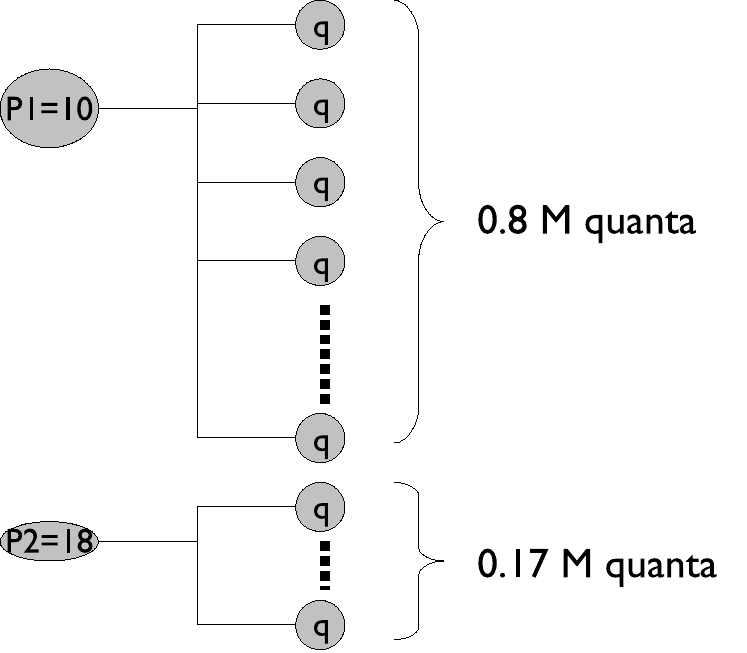}
	}
\caption{Unschedulable task set graph}
\label{fig:unsch-graph}
\end{figure}

\section{Sharpness of utilization thresholds}
\label{sec:sharpness}

The scheduling graph provides a structure to study the typical (or average) case behavior of scheduling problems. Given a utilization level, each edge in the graph appears with a certain probability that captures typical scheduling problems. 

Scheduling problems do exhibit threshold behavior and that this behavior is controlled by the utilization of the set of tasks to be scheduled. To be convinced, it should suffice to remark that when utilization is $0$ all task sets are schedulable, and for utilizations above $1$ no task set is schedulable. The primary question then becomes: ``Is the threshold sharp or is it coarse?'' To answer this question we will need further results.

\subsection{Preliminaries}
We will apply a theorem obtained by Bourgain that appeared as an appendix to Friedgut's article~\cite{Friedgut:99}. Recall that $\{0,1\}^{N}$ is the set of all $N$-bit vectors and that any $x \in \{0,1\}^{N}$ is an $N$-bit vector. We can use these vectors to indicate the presence of edges in a graph with at most $N$ edges. The size of such a vector $x$, denoted $|x|$, is the number of 1s it contains. 

Let $V = \{0,1\}^{N}$ and let $W$ be some subset of $V$ that represents a graph property. In our discussion, it will be useful to consider $V$ to be the collection of all possible task set graphs and $W$ to be the collection of unschedulable task set graphs. When the vertex partitions $\mathcal{T}$ and $\mathcal{U}$ are known, then the task set graph is defined by its edges. If $N$ is the maximum possible number of edges, then every element of $V$, i.e., an $N$-bit vector, represents a task set graph. 

The general definition of a monotone property follows, where $x$ and $y$ are elements of $V$ and can also be treated as vectors; $x_i$ is the $i^{th}$ element of the vector $x$.

\begin{definition}[Monotone property] 
$W \subset V$ is said to be monotone if and only if $\forall x \in W, y \in  V: (\forall i, y_i \ge x_i) \Rightarrow y \in W.$
\end{definition}

In the context of graphs, a monotone property is one that cannot be destroyed by the addition of edges. 

If a  task set is unschedulable using the rate monotonic policy, then increasing the utilization of any of the tasks, which adds edges to the corresponding task set graph, will also result in an unschedulable task set. The monotone graph property of interest to us is that collection of edges that makes a task set unschedulable. (Adding edges to a graph representing an unschedulable task set will result in another unschedulable task set graph.)

In the theorem that follows, the term $\mu_p(A)$ is the probability that a graph property $A$ is present if each of the $N$ edges is chosen with probability $p$. If $x$ represents a graph, and if $x_{i}$ is $1$ when edge $i$ is present in $x$ and is $0$ otherwise, then
\[ \mu_{p}(A) := \sum_{x \in A} p^{|x|}(1-p)^{N-|x|}. \]

\begin{theorem}[Bourgain~\cite{Friedgut:99}] 
\label{th:bourgain}
Let $A \subset \{0,1\}^{N}$ be a monotone property and assume say 
\begin{eqnarray}
	(A1): & \mu_p(A) = \frac{1}{2}, & \\
	(A2): & p\frac{d\mu_p(A)}{dp} < C, & \\
	(A3): & p = o(1). &
\end{eqnarray}
Then there is a $\delta = f(C)$ for some function $f$ such that at least one of the following two conditions must be true:
\begin{enumerate}
	\item[C1:] \begin{equation}
	\label{eq:ineq21}
	\mu_p(x \in \{0,1\}^{N} | x \textrm{ contains } x' \in A, |x'| \le 10C) > \delta
\end{equation}
	\item[C2:] There exists $x' \notin A$ \textrm{ of size } $|x'| \le 10C$ such that the conditional probability
\begin{equation}
	\label{eq:ineq22}
	\mu_p(x \in A | x \supset x') > \frac{1}{2}+\delta.
\end{equation}
\end{enumerate}
\end{theorem}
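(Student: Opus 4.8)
The plan is to prove this as a statement in the Fourier analysis of Boolean functions on the biased hypercube, treating $f = \mathbf{1}_A$ as a monotone Boolean function under the $p$-biased product measure $\mu_p$. The starting point is to reinterpret the analytic hypotheses combinatorially. By the Margulis--Russo identity, $\frac{d\mu_p(A)}{dp} = \sum_{i=1}^{N} \mathrm{Inf}_i^{(p)}(A)$, where $\mathrm{Inf}_i^{(p)}(A)$ is the $p$-biased influence of coordinate $i$, i.e.\ the probability that flipping bit $i$ changes membership in $A$. Hypothesis (A2) therefore says exactly that the total (weighted) influence is bounded: $p \sum_i \mathrm{Inf}_i^{(p)}(A) < C$. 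Intuitively, a bounded total influence means the threshold is coarse, and the conclusion we want is that a coarse threshold must be \emph{explained locally} --- either by small certificates that already appear with non-negligible probability, or by a small seed that boosts the conditional probability.

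First I would set up the $p$-biased Fourier--Walsh expansion of $f$, choosing the orthonormal character basis adapted to $\mu_p$ so that the empty coefficient equals $\mu_p(A) = \tfrac12$ by (A1). The individual influences and the total influence can be read off from the Fourier weights at each level, and the bound from (A2) constrains how much weight can sit on high-degree characters. The central tool is hypercontractivity: applying the biased Bonami--Beckner noise operator $T_\rho$ together with the hypercontractive inequality $\|T_\rho f\|_2 \le \|f\|_{1+\rho^2}$, one shows that a monotone indicator with bounded total influence has most of its relevant Fourier mass carried by characters of bounded degree $O(C)$. This quantitative concentration is what underlies the Friedgut--Bourgain circle of results, and it is what produces the cutoff $|x'| \le 10C$ appearing in both alternatives.

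With the concentration in hand I would run a dichotomy argument. Either the low-degree structure exhibits a genuine small witness $x' \in A$ with $|x'| \le 10C$ that occurs with probability exceeding a constant $\delta$ --- this is alternative C1 --- or no such witness is frequent, in which case the concentrated mass must be arranged so that some specific configuration $x' \notin A$ of size $|x'| \le 10C$ acts as a seed: conditioning on $x \supset x'$ removes enough of the remaining uncertainty to push $\mu_p(A \mid x \supset x')$ above $\tfrac12 + \delta$, which is alternative C2. The value $\delta = f(C)$ is then extracted from the hypercontractive constants and the truncation level, and crucially must be shown to depend only on $C$ and not on $N$ or $p$. Monotonicity of $A$ is used throughout to ensure that influences, certificates, and the ``adding edges'' operation all point in the same direction.

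The hard part, and the reason this is a genuine theorem rather than a routine computation, is the hypercontractive estimate in the sparse regime $p = o(1)$ of hypothesis (A3). The standard Bonami--Beckner constants degrade as $p \to 0$, so the uniform-measure argument does not transfer directly; one must instead invoke the refined asymmetric hypercontractivity (as developed by Talagrand and used in Friedgut's appendix) in order to keep every bound, and in particular $\delta$, free of any dependence on $p$ or $N$. Balancing the noise parameter $\rho$ against the vanishing bias while preserving the clean witness-size bound $10C$ is where the real care is required, and I expect this to be the principal obstacle.
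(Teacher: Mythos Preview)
The paper does not prove this theorem at all. It is stated as a cited result---Bourgain's theorem from the appendix to Friedgut's article~\cite{Friedgut:99}---and is used as a black box in the proof of Theorem~\ref{th:main}. Immediately after the statement the paper offers only interpretive commentary (explaining the meaning of the two alternatives and why a local property corresponds to a coarse threshold), not a proof. So there is nothing in the paper to compare your proposal against.

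That said, your sketch is a reasonable high-level outline of how Bourgain's original argument actually proceeds: the Margulis--Russo reformulation of (A2) as a total-influence bound, $p$-biased Fourier analysis, and the delicate hypercontractive estimate in the sparse regime $p=o(1)$ are indeed the ingredients. You are also right that the $p\to 0$ hypercontractivity is the crux and that the naive Bonami--Beckner constants blow up; Bourgain's contribution was precisely to handle this regime where Friedgut's symmetric argument fails. But for the purposes of \emph{this} paper none of that is needed---the authors simply invoke the theorem and apply its contrapositive to the task-set graph. If you are writing up this paper's results, you should state Theorem~\ref{th:bourgain} with a citation and omit the proof entirely.
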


$f(n) \in o(g(n))$ is equivalent to stating that $\lim_{n \rightarrow \infty} \frac{f(n)}{g(n)} =  0$. The edge probabilities are functions of $N$, the maximum size of the graph, and are expected to diminish as $N$ increases. This is captured as $p=o(1)$, to indicate that $p \ll 1$.

Some comments about Bourgain's theorem are now in order. Bourgain's theorem, in essence, states that if a monotone property is such that $p\frac{d\mu}{dp} < C$ then that monotone property is approximated by a ``local property.''\footnote{Friedgut proved a similar result except that Friedgut's approach required that the random structure under investigation exhibit some symmetry~\cite{Friedgut:99}.} In a graph, a local property is a property that depends on a small number of vertices and edges. Bourgain proved that if $p\frac{d\mu_p(A)}{dp}$ is bounded by some constant, then there must exist some small graphs (whose sizes are bounded by a constant) that are capable of boosting the probability of the desired property appearing. $x'$ is such a booster. Inequality \eqref{eq:ineq21} suggests that most graphs that possess the monotone property in fact contain a subgraph that satisfies the property. Inequality \eqref{eq:ineq22} is equivalent to saying that for some graph $y \subset \{0,1\}^{N}$, the probability that $y \cup x'$ is in $A$ is at least $1/2 + \delta$.

We shall explain this result using the Erd\"os and R\'enyi model for random graphs. In this model, each of the possible $m \choose 2$ edges in a graph with $m$ vertices is added with probability $p$. The property that the random graph is connected is not a local property because it involves all the vertices in the graph. On the other hand, the property that the random graph contains a triangle is local because a triangle has only $3$ vertices. It is for this reason that connectivity has a sharp threshold~\cite{ER:59} but the existence of a triangle has a coarse threshold~\cite{Friedgut:99}. A sharp threshold is associated with a rapid change in the appearance or disappearance of a property, which means that $\frac{d\mu_p(A)}{dp} \rightarrow \infty$ when $\mu_p(A) = \frac{1}{2}$. When a threshold is coarse, this derivative (or slope) is finite.
A vital point to note is that $p\frac{d\mu_p(A)}{dp} < C$ is definitely true for all coarse thresholds and may be true for some sharp thresholds. On the other hand, $p\frac{d\mu_p(A)}{dp} \rightarrow \infty$ holds only for sharp thresholds and is a stronger characterization of certain sharp thresholds. For schedulability, we will show that this stronger result holds. To do so, we will show that schedulability depends on a non-local property of the schedulability graph. We also add that $\mu_{p}(A)$ is continuous and its derivative exists: every 

\begin{lemma}
For a set of $n$ periodic tasks with periods $P_{1},\dots,P_{n}$, the minimum utilization for a task set that is barely schedulable using the rate monotonic policy is achieved only when all $n$ tasks have specific execution times. As a result, any task set that is unschedulable will have at least $n$ edges.
\label{lem:n-edges}
\end{lemma}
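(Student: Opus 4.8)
The plan is to reduce the claim to the classical critical-instant analysis of Liu and Layland and then translate its conclusion into the edge-counting language of the task set graph. First I would fix the periods and order them as $P_1 \le \cdots \le P_n$, and ask, among all assignments of execution times that leave the task set just barely schedulable under rate monotonic priorities, which assignment minimizes the utilization $\sum_i c_i/P_i$. The standard way to pin this down is to invoke the critical instant (all jobs released simultaneously, which the model already assumes) and to push each task to consume exactly the slack left by the higher-priority tasks. In the regime that yields the minimum, where the periods fit within a factor of two so that $P_n < 2P_1$, this forces the worst-case execution times $c_i = P_{i+1}-P_i$ for $i<n$ and $c_n = 2P_1 - P_n$.

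The key structural observation I would then extract is that every one of these execution times is strictly positive: $P_{i+1}-P_i>0$ because the periods are strictly increasing, and $2P_1 - P_n > 0$ precisely in the minimizing regime $P_n < 2P_1$. Hence at the minimum-utilization boundary configuration all $n$ tasks carry nonzero execution time, which is exactly the assertion that the minimum is attained only when all $n$ tasks have specific (positive) execution times. I would emphasize that this positivity is not incidental: if some task were allowed $c_j=0$ the instance would collapse to an $(n-1)$-task problem, whose schedulability bound $(n-1)(2^{1/(n-1)}-1)$ is strictly larger than the $n$-task bound $n(2^{1/n}-1)$, so a configuration with a silent task cannot realize the true $n$-task minimum.

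Next I would translate this into edges. In the quantized graph model each task with utilization $u_i>0$ must receive at least one quantum $q$, and each allocated quantum is a single edge joining the task vertex in $\mathcal{T}$ to a utilization vertex in $\mathcal{U}$; thus a task vertex has degree at least one exactly when its execution time is positive. Since the minimal, barely unschedulable configuration activates all $n$ tasks, its task set graph has degree at least one at every vertex of $\mathcal{T}$ and therefore at least $n$ edges. Because unschedulability is a monotone property, every unschedulable task set graph contains such a minimal configuration as a subgraph, and adding edges only increases the count, so every unschedulable task set graph has at least $n$ edges.

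The step I expect to be the main obstacle is making rigorous that the minimum is attained only with all tasks active, i.e.\ ruling out degenerate minimizers in which one or more tasks are silent; the clean resolution is the strict monotonicity of the Liu--Layland bound in $n$ noted above, which I would make explicit rather than rely on the intuition that ``using more tasks helps.'' A secondary point to handle carefully is the quantization: I must confirm that $q$ can be taken small enough that each positive execution time of the worst-case configuration rounds up to at least one quantum, which holds because $M=1/q$ is assumed arbitrarily large.
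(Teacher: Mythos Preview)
Your approach mirrors the paper's: both invoke the Liu--Layland critical configuration $c_i=P_{i+1}-P_i$ for $i<n$ and $c_n=2P_1-P_n$, observe that every $c_i$ is strictly positive there, and then read off that the corresponding task set graph has at least $n$ edges. Your added care about why a configuration with a silent task cannot realize the $n$-task minimum, and about the quantization granularity, are refinements the paper simply omits.

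There is one small wrinkle in your final step. You write that ``every unschedulable task set graph contains such a minimal configuration as a subgraph,'' meaning the specific minimum-utilization configuration you just built; but that is not true in general. An unschedulable graph contains \emph{some} minimal unschedulable subgraph, and such a subgraph may well have a silent task (for instance, with periods $2,3,100$ one can be unschedulable with $c_3=0$), so your positivity argument does not directly cover every minimal element of the up-set. The paper closes this gap differently and more directly: since the total number of edges in a task set graph is exactly $U\cdot M$, and any unschedulable task set has utilization at least that of the minimum barely-schedulable configuration, the edge count is automatically at least that configuration's edge count, hence at least $n$. If you want to keep your monotonicity phrasing, the clean fix is to argue on total utilization rather than on subgraph containment.
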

\begin{proof}
From Liu and Layland's proof~\cite{LL:73}, the task set with minimum utilization that is {\em barely schedulable}\footnote{A barely schedulable task set is one that fully utilizes the processor for an interval of time that begins with the arrival of an instance of some task and extends at least up to the deadline of that task instance.} is such that $c_{i} = P_{i+1}-P_{i}, \forall i \in \{1,n-1\}$, $c_{n} = P_{n} - 2(c_{1}+\dots+c_{n-1})$, and $P_{n} > P_{n-1} > \dots > P_{1}$. Because utilization of the barely schedulable task set is minimized only when all tasks have non-zero execution times, the task set graph has at least $n$ edges for the barely schedulable task set. A task set that is unschedulable has to have a higher utilization and hence the corresponding task set graph will also have at least $n$ edges.
\end{proof} 

\subsection{Main result}

\begin{theorem}
	The schedulability of a task set with $n$ periodic tasks, where each task $\tau_i$ is characterized by execution time $c_i$, period $P_i$, and relative deadline equal to its period, has a sharp utilization threshold. The utilization of the set of tasks is $U := \sum_{i=1}^{n} \frac{c_i}{P_i}$.
	\label{th:main}
\end{theorem}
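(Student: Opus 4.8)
The plan is to prove sharpness by contradiction, using Bourgain's theorem (Theorem~\ref{th:bourgain}) together with the edge-count lower bound of Lemma~\ref{lem:n-edges}. Let $A \subset \{0,1\}^{N}$ denote the property ``the task set graph is unschedulable,'' which we have already argued is monotone (raising utilizations adds edges and preserves unschedulability). For each $n$ I would fix the critical edge probability $p = p_n$ at which $\mu_p(A) = \tfrac{1}{2}$, securing hypothesis (A1). Hypothesis (A3), $p = o(1)$, holds because the number of potential edges $N = Mn$ grows (with $M=(k_2)^n$) while the edge probability needed to keep the graph near the transition shrinks, so $p_n \to 0$. The entire content of the proof is then to show that hypothesis (A2) must \emph{fail}, i.e. that $p\frac{d\mu_p(A)}{dp} \to \infty$, which is precisely the steepness condition that forces the threshold to be sharp.

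Accordingly, suppose toward a contradiction that $p\frac{d\mu_p(A)}{dp} \le C$ for some constant $C$ (along an infinite set of values of $n$). Then all three hypotheses of Theorem~\ref{th:bourgain} hold and we obtain a $\delta = f(C) > 0$ for which at least one of C1, C2 is true; both assert the existence of a ``booster'' $x'$ of bounded size $|x'| \le 10C$. I would dispatch C1 first. Inequality~\eqref{eq:ineq21} requires a subgraph $x' \in A$ with $|x'| \le 10C$, i.e. an unschedulable task set realized by at most $10C$ edges. But Lemma~\ref{lem:n-edges} guarantees that every unschedulable task set graph has at least $n$ edges, so for all $n > 10C$ there is \emph{no} such $x'$; the event in~\eqref{eq:ineq21} is empty, its probability is $0$, and $0 \not> \delta$. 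Hence C1 cannot hold for large $n$.

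The remaining and more delicate case is C2, which posits a \emph{schedulable} booster $x' \notin A$ of size $|x'| \le 10C$ whose presence raises the conditional probability of unschedulability above $\tfrac{1}{2}+\delta$. The point of refuting C2 is to show that unschedulability is genuinely \emph{non-local}: no bounded set of edges can carry a constant share of the responsibility for crossing into $A$. Writing $A' = \{\,y : y \cup x' \in A\,\}$ on the reduced edge set, so that $\mu_p(x \in A \mid x \supset x') = \mu_{p}(A')$, Lemma~\ref{lem:n-edges} shows that $y \cup x' \in A$ still requires at least $n$ edges in total; forcing the $\le 10C$ edges of $x'$ therefore only lowers the count that must be supplied by the random part from $\ge n$ to $\ge n - 10C$. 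Since the number of edges required to enter $A$ grows without bound in $n$ while $|x'|$ stays bounded, the booster contributes only a vanishing fraction of the edges responsible for unschedulability. Quantitatively I would invoke Russo-type reasoning: $\frac{d\mu_p(A)}{dp}$ equals the sum of the individual edge influences, the gain obtainable by forcing the edges of $x'$ is controlled by the influence those particular edges carry, and the standing bound $p\frac{d\mu_p(A)}{dp}\le C$ caps the \emph{total} influence, while Lemma~\ref{lem:n-edges} prevents a bounded edge set from absorbing a constant fraction of it. This yields $\mu_p(x \in A \mid x \supset x') = \tfrac{1}{2} + o(1) < \tfrac{1}{2}+\delta$ for large $n$, contradicting~\eqref{eq:ineq22}. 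With both C1 and C2 refuted we contradict Theorem~\ref{th:bourgain}, so (A2) is untenable for every constant $C$; hence $p\frac{d\mu_p(A)}{dp}\to\infty$, the transition at $\mu_p(A)=\tfrac{1}{2}$ is infinitely steep, and the threshold is sharp. Translating from the edge probability $p$ to utilization $U$ is then routine, since $U$ is a monotone increasing function of $p$, so steepness in $p$ transfers to a sharp utilization threshold $U^{*}$ in the sense of the sharp-threshold definition.

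I expect the refutation of C2 to be the main obstacle. Everything needed for C1 is immediate from Lemma~\ref{lem:n-edges}, but C2 requires quantitatively bounding how much a fixed bounded subgraph can boost $\mu_p(A)$, and doing so \emph{uniformly} over all admissible boosters $x'$. The crux is ruling out concentration of influence: I must show that no bounded set of edges absorbs a constant fraction of $\frac{d\mu_p(A)}{dp}$, which is exactly where the non-locality supplied by Lemma~\ref{lem:n-edges} is essential. Care is also needed to avoid circularity, since the bound $p\frac{d\mu_p(A)}{dp}\le C$ is simultaneously the working hypothesis and the quantity being controlled, so the influence estimates must be applied only within the assumed-coarse regime.
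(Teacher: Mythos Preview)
Your proposal follows essentially the same route as the paper: assume (A2) holds, verify (A1) and (A3), invoke Bourgain's theorem, then refute both C1 and C2 to obtain a contradiction. Your C1 argument via Lemma~\ref{lem:n-edges} is identical to the paper's.

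The only noteworthy difference is in how C2 is dispatched. You invoke Russo-type influence reasoning, arguing that the edges of a bounded booster $x'$ cannot absorb a constant share of the total influence. The paper instead argues more directly from the structure of unschedulable configurations: by Lemma~\ref{lem:n-edges}, unschedulability requires \emph{every} task vertex to receive at least one edge, so pre-placing $K$ edges can at best cover $K$ of the $n$ tasks, and the conditional probability that the remaining $n-K$ tasks each receive an edge still depends on the full edge set and cannot jump by an additive constant $\delta$. Both arguments are aimed at the same non-locality point and both are somewhat informal at this step (you rightly flag the circularity risk in using the assumed bound $p\,d\mu_p/dp\le C$ inside the estimate; the paper's coverage argument is likewise heuristic). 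Substantively the strategies coincide; the paper's ``every task must be hit'' phrasing is a slightly more concrete instantiation of the non-locality you are appealing to, and you may find it easier to make precise than the influence route.
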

\begin{proof}
	Consider the task set graph that represents the (un)schedulability of a task set with $n$ tasks when each edge occurs with probability $p$ and the corresponding utilization level is $Mnp$. A task set is unschedulable if and only if the corresponding task set graph includes an assignment of utilizations to periods that causes deadline misses. We need to show that there is some $p^{*}$, and hence some $U^{*} = Mnp^{*}$, that is a sharp threshold.
	
	Let $A$ represent the property of a task set graph containing an unschedulable assignment of utilizations to periods. Choose a $p$ such that $\mu_p(A) = \frac{1}{2}.$ We can always find such a $p$ because we know that all task sets are schedulable when utilization is $0$ and no task set is schedulable if its utilization exceeds $1$.
	
	$\mu_{p(A)}$ is a polynomial in $p$ and is differentiable with respect to $p$. Let us suppose that $p\frac{d\mu_p(A)}{dp} < K/10$. It is also the case that $p < 1/Mn$ and hence $p=o(1)$.
	
	We will assume that the set of possible edges is $E$. From Bourgain's theorem (Theorem~\ref{th:bourgain}) we know that if all the conditions are true (especially the constraint on $p\frac{d\mu_p(A)}{dp}$) then there must exist some $x'$ such that 
	\begin{equation}
		\label{eq:ineq1}
		\mu_p(x \in \{0,1\}^{|E|})| x \textrm{ contains } x' \in A, |x'| \le K) > \delta 
	\end{equation} 
	or there exists $x' \notin A$ of size $|x'| \le K$ such that the conditional probability 
	\begin{equation}
		\label{eq:ineq2}
		\mu_p(x \in A | x \supset x') > \frac{1}{2}+\delta 
	\end{equation} 
for some $\delta = f(K)$.
	
	From Lemma~\ref{lem:n-edges} we realize that at least $n$ edges are required in the task set graph to make a task set unschedulable. Task sets that are unschedulable at higher utilization levels (higher than the unschedulable task set with minimum utilization) will have more edges in the task set graph. This observation helps us eliminate the possibility of an $x' \in A$ of constant size because the size of the minimal unschedulable task graph increases as we increase the number of tasks. In other words, inequality~\eqref{eq:ineq1} does not hold.

	Inequality~\eqref{eq:ineq2} cannot be true because that would imply that even assigning a very small utilization to certain tasks is bound to increase the probability of unschedulability by an additive constant. Let us assume that $K$ edges exist a priori in a task set graph. By Lemma~\ref{lem:n-edges}, we know that at least $n$ edges are needed for an unschedulable task graph, and each task (period) should receive at least one edge. The conditional probability that each task (period) gets at least one edge given that $K$ edges exist a priori and have been assigned in the best possible way (an edge each to $K$ periods) is still dependent on the total number of edges, $M$, which is greater than $n$. Thus the influence of a constant number of edges in the task set graph can not increase the probability of inducing unschedulability by a constant $\delta = f(K)$.
					
	When both inequalities~\eqref{eq:ineq1} and \eqref{eq:ineq2} do not hold, by a contrapositive argument, we cannot have $p\frac{d\mu_p(A)}{dp} < K/10$. (The other two prerequisites for Bourgain's theorem are definitely true.) Since $p\frac{d\mu_p(A)}{dp}$ is not bounded, we conclude that schedulability has a sharp threshold.
\end{proof}

The structure of the proof above is that, for task set graphs, premises (A1) and (A3) from Bourgain's theorem (Theorem~\ref{th:bourgain}) hold and conditions (C1) and (C2) are false therefore (A2) must be false and $p\frac{d\mu_p(A)}{dp}$ indicates a sharp threshold.

With edge probabilities being related to the utilization $U$, we can also use the term $\mu_{U}(A)$ to represent the probability that a task set with utilization $U$ is schedulable.

\begin{remark}[Width of the threshold interval] 
\label{rem:width}
As $n \rightarrow \infty$, the sharp threshold theorems indicate that the transition will be swift and going past the threshold will cause an immediate change in the ability to find the property of interest. For moderate values of $n$, it is possible to obtain some understanding of the swiftness of the transition. The width of the threshold interval is the smallest difference $U_1 - U_2$ such that $\mu_{U_1} (A) = \epsilon$ and $\mu_{U_2}(A) = 1 - \epsilon$ for a fixed $\epsilon, 0 < \epsilon < 1/2$. The width appears to be related to the number of permutations that are possible on the random structure. For the scheduling graph, the valid permutations correspond to permutations of the task set, i.e., among the $n$ tasks. Based on the work by Friedgut and Kalai (see Section 5 of their article~\cite{FK:96}), we conjecture that for a task set with $n$ tasks, the width of the threshold interval is $O(\frac{1}{\sqrt{n}})$.
\end{remark}

\begin{remark}[Location of the threshold]
Given a finite (but large) number of options for task periods, we have shown that there exists a sharp threshold for rate monotonic scheduling. The location of the threshold does depend on the number of tasks and the task periods. When the number of task periods are large, and not chosen pathologically, the location of the sharp threshold indicates good processor utilization.
\end{remark}

\section{Empirical results and discussion}
\label{sec:expts}

\subsection{Threshold behavior}

\begin{figure*}[htb]
\centering
    \subfigure[Task sets generated using the {\sc UUniSort} method~\cite{BB:05}]{
        \scalebox{0.60}{
            \includegraphics{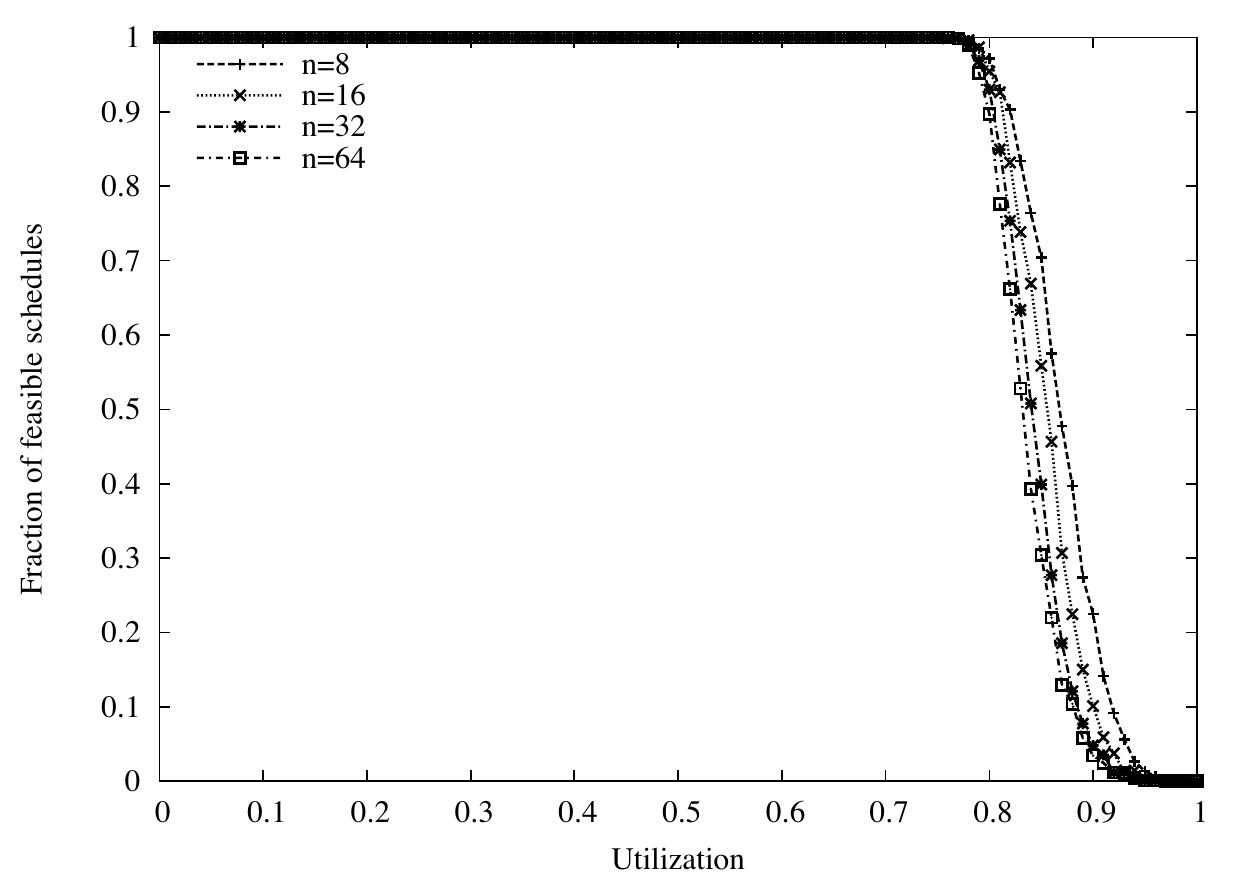}
        }
    }
    \subfigure[Task sets where each task had the same utilization]{
        \scalebox{0.60}{
            \includegraphics{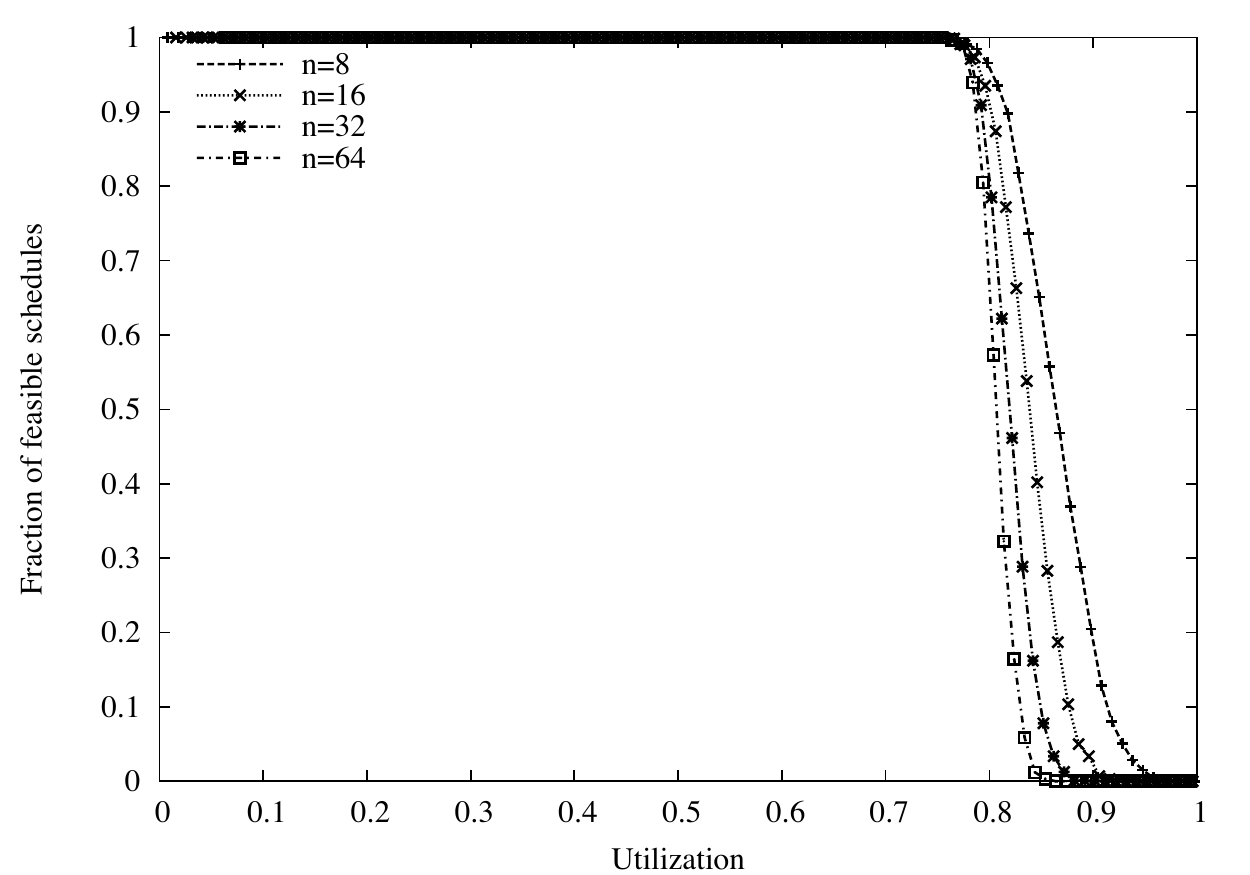}
        }
    }
  \caption{Thresholds for rate monotonic scheduling}
  \label{fig:graphs-uniprocessors}
\end{figure*}

Having established that rate monotonic scheduling has a sharp threshold, we use experiments to locate the threshold and to observe the swiftness of the transition from schedulability to unschedulability. A closed-form solution for the threshold has been elusive, and empirical evidence is our resort.

When examining experimental data, it behooves us to recall that sharp threshold behavior is a property of large task sets. For moderate size task sets, one can observe a threshold but it may not be as sharp as one would expect. (We present only a limited number of graphs for space considerations. Given the immense number of graphs that can be obtained, those shown here are intended as a visual cue to the theoretical machinery we have used. The results presented here  suggest that task sets of nominal sizes have a usable threshold.)

Bini and Buttazzo~\cite{BB:05} have studied different approaches to generating random task sets and have suggested methods with (almost) no bias. The goal of Bini and Buttazzo's work was to generate task sets uniformly at random from the space of all possible task sets that achieve utilization $U$. We employed the {\sc UUniSort} procedure from the article by Bini and Buttazzo~\cite{BB:05}. Periods were then drawn uniformly at random from $[1,10^{5}]$. Task set utilization was varied in steps of $0.1$ and at each level we tested $10^{4}$ task sets. The different numbers of tasks in a task set for the experiments were $8, 16, 32$ and $64$. Notice (in Figure~\ref{fig:graphs-uniprocessors}(a)) that schedulability drops rapidly when utilization is in the range $[0.8, 0.9]$. The width of the threshold interval is smaller for larger task sets. Within a rather short interval, we go from almost all task sets being schedulable to almost no task set being schedulable. This transition allows us to approximate the schedulbility test by using a utilization threshold near $0.8$.

We also conducted another experiment where we generated task sets that had the same utilization for each task: in other words, the total utilization was divided equally among all tasks. This experiment is informative because critically schedulable task sets for rate monotonic scheduling have this property~\cite{LL:73,BB:05}. The results of this experiment reveal (Figure~\ref{fig:graphs-uniprocessors}(b)) that when period values are arbitrary the achievable utilization is significantly higher than tight utilization bounds and that the threshold between schedulability and unschedulability is sharper.

The sharp utilization threshold result appears remarkable because it makes no assumptions about task periods and yet provides quite a precise estimate of schedulability. The general methodology for deriving utilization bounds for any scheduling policy involves identifying a task set that achieves low utilization and is yet unschedulable. It is not always easy to isolate the worst-case task set and determine its utilization. A major payoff from Theorem~\ref{th:main} is the ability to obtain thresholds empirically. When the worst case is rare (a low probability event) we are not burdened with a low utilization bound.

A possible concern is the asymptotic nature of the result. Sharp threshold behavior occurs when the number of tasks is large. We contend that this is exactly the case for which existing real-time
scheduling results are often inefficient (high complexity for analysis). As experiments reveal, a moderate number of tasks is sufficient for observing sharp thresholds. For small task sets, even
exact tests may be performed very quickly. There is a dependency between the threshold and the number of tasks. It is easily possible to compute -- offline -- the threshold for different numbers of tasks and utilize the appropriate threshold.

The use of thresholds becomes extremely useful in the case of soft real-time systems and for performing fast exploration of design space in developing (near-)optimal systems. An example is radar dwell scheduling~\cite{GRHL:04, GCSLS:04, GC:06b}. There are many task parameters that need to be tuned in a radar system to minimize tracking error subject to schedulability but the scheduling algorithms are hard to analyze; using thresholds for these problems simplifies the online optimization. Because performance is controlled at run-time, optimization routines cannot invoke exact tests that have high time complexity. Apart from online optimization, thresholds can be used as offline guidance measures to improve system designs.

\subsection{Some comparisons}

In our work, we make no assumptions about the task periods and execution times: they can be arbitrary. There has been work by Park, Natarajan and Kanevsky~\cite{PNK:96} and Lee, Sha and Peddi~\cite{LSP:04} obtained good utilization bounds by using task periods alone; execution times of tasks were unknowns in their approach. To determine if there is an improvement in coverage due to sharp threshold behavior, we assumed that tasks are restricted to periods in the set $\{3,8,11,16,20,42,120,300\}$; this set of periods has a utilization upper bound of $0.88$ using the technique of Lee et al.~\cite{LSP:04}. Generating tasks as we did earlier (using the {\sc UUniSort} approach), we found that the sharp threshold is about $0.94$, which is a $6\%$ improvement in utilization compared to the utilization upper bound obtained (Figure~\ref{fig:known-periods}). Techniques that use period information to obtain utilization bounds are effective but sharp threshold behavior allows us to be more aggressive even when period information is available. These results also indicate that sharp thresholds do exist even if periods are drawn from a restricted set.

\begin{figure}[htb]
\centering
	\scalebox{0.60}{
    	\includegraphics{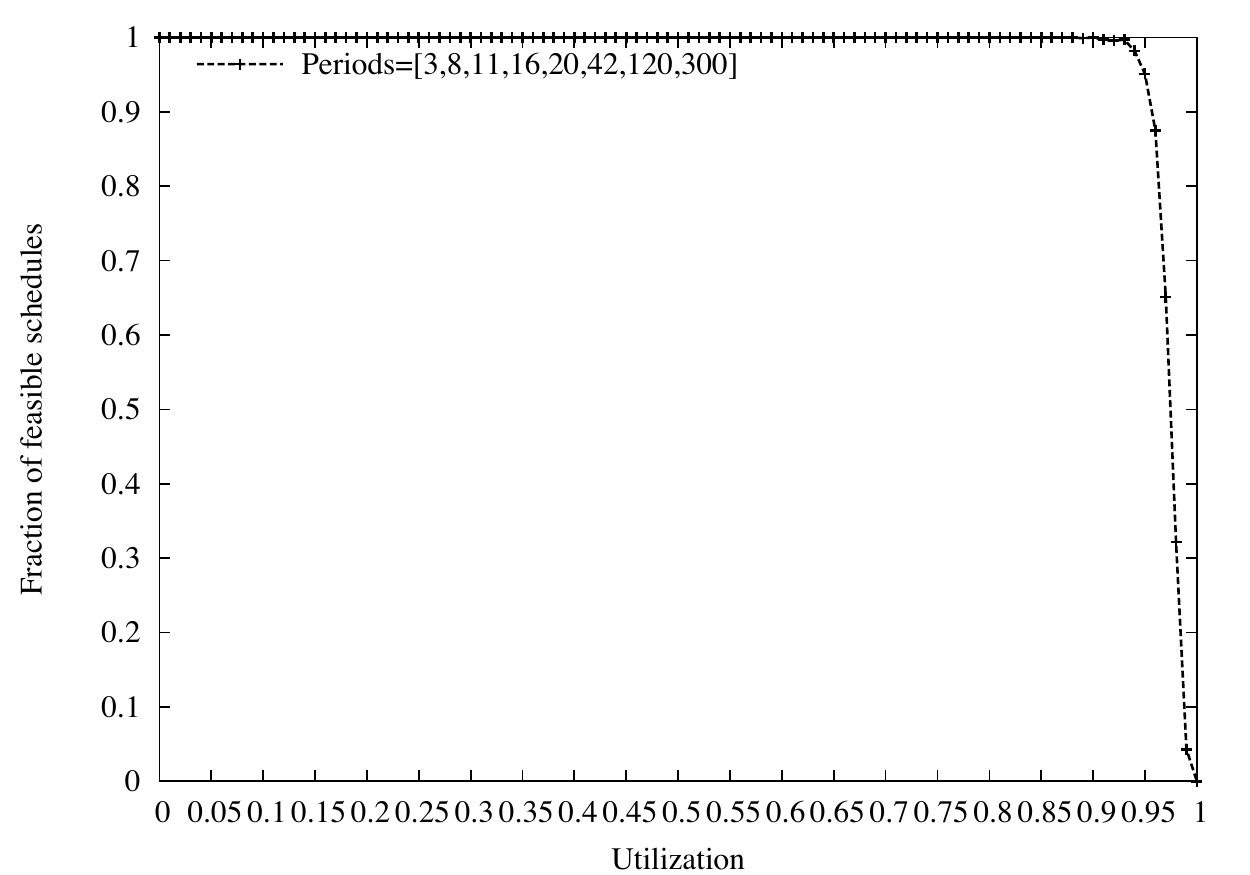}
    }
  \caption{Thresholds with known periods}
  \label{fig:known-periods}
\end{figure}

\section{Aperiodic workload and web server QoS}
\label{sec:aperiodic}

So far we have discussed rate monotonic scheduling of periodic tasks. In this section we extend the sharp threshold result to the aperiodic task model and highlight an application of this idea to the improved power management in delay-sensitive web services.

\subsection{Threshold for static-priority scheduling of aperiodic tasks}

We can look beyond periodic tasks and consider tasks that do not have a strictly periodic arrival pattern. Such a model has been investigated by Abdelzaher, Sharma and Lu~\cite{ASL:04} who derived {\em synthetic utilization bounds} for task sets where the execution times and relative deadlines for tasks are known. We can extend the theory of sharp thresholds to the case of aperiodic tasks easily. In this section we will establish that the schedulability of aperiodic tasks using the deadline monotonic priority policy has a sharp threshold and we will use this fact to improve on a power management scheme for web servers that was suggested by Sharma et al.~\cite{STASL:03}.

A job $i$ in an aperiodic task model has an arrival time $a_{i}$, an execution time $c_{i}$ and a relative deadline $D_{i}$ (the absolute deadline is $a_{i}+D_{i}$). Abdelzaher, Sharma and Lu define the synthetic utilization~\cite{ASL:04} of the set of active tasks at time $t$ as \[ U(t) = \sum_{\textrm{active tasks}} \frac{c_{i}}{D_{i}}\] where the set of active tasks at time $t$ is the set of tasks that were released at or before time instant $t$ and whose absolute deadlines are not earlier than $t$, i.e., $a_{i} \le t$ and $a_{i}+D_{i} \ge t$. If the synthetic utilization never exceeds a synthetic utilization bound, $U_{b}$, then all jobs are guaranteed to meet their deadlines~\cite{ASL:04,AL:01}. If $n$ is the maximum number of instances that can be active at any given time instant, we can show that there must exist a threshold $U^{*}$ such that task invocation patterns with $U(t) < (1-\epsilon)U^{*}$ are schedulable almost surely and task invocation patterns with $U(t) > (1+\epsilon)U^{*}$ are not schedulable almost surely as $n \rightarrow \infty$ for any $\epsilon > 0$.

It is useful to maintain a notion of job streams, which we will now define. 
\begin{definition}
An aperiodic job stream is a set of jobs where each job has the same execution time and relative deadline and job $j$ precedes job $k$ in the job stream iff $a_{j}+D_{j} \le a_{k}$.
\end{definition}
Essentially, only one instance of each job stream is active at a given time instant $t$.

\begin{theorem}
	The schedulability of aperiodic task streams, where each task stream $\tau_i$ is characterized by jobs with execution time $c_i$ and relative deadline $D_{i}$, has a sharp synthetic utilization threshold. The synthetic utilization at any time $t$ is $U(t) \le \sum_{i=1}^{n} \frac{c_i}{D_i}$.
	\label{th:main2}
\end{theorem}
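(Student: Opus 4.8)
The plan is to mirror the argument for the periodic case in Theorem~\ref{th:main} almost line for line, with the synthetic utilization playing the role of the periodic utilization and the deadline monotonic policy (which is a monotone scheduling policy) playing the role of rate monotonic. First I would build the aperiodic task set graph as a bipartite graph whose partition $\mathcal{T}$ consists of the (at most) $n$ job streams that can be simultaneously active, each vertex labeled by its relative deadline $D_i$, and whose partition $\mathcal{U}$ consists of the $M = 1/q$ utilization quanta. Placing each edge independently with probability $p$ generates random active sets whose expected synthetic utilization is $Mnp$, so tuning $p$ sweeps the synthetic utilization. Let $A$ be the property that the graph contains an assignment of utilization quanta to streams that is unschedulable under deadline monotonic scheduling; since raising any $c_i$ only adds edges and deadline monotonic is monotone, $A$ is a monotone property and Bourgain's theorem (Theorem~\ref{th:bourgain}) applies.

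Next I would check the premises (A1) and (A3). Because a synthetic utilization of $0$ is always schedulable while any active set whose synthetic utilization exceeds the Abdelzaher--Lu synthetic utilization bound~\cite{ASL:04,AL:01} is eventually unschedulable, by continuity of the polynomial $\mu_p(A)$ there is a $p$ with $\mu_p(A) = \tfrac12$, giving (A1); and $p < 1/(Mn)$ forces $p = o(1)$, giving (A3). I would then assume toward a contradiction that $p\,\frac{d\mu_p(A)}{dp} < K/10$, i.e. premise (A2) with $C = K/10$, and aim to refute both conclusions (C1) and (C2).

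The crux of the whole argument is the aperiodic analog of Lemma~\ref{lem:n-edges}: the smallest synthetic utilization that can force a deadline miss is attained only when all $n$ active streams carry nonzero execution time, so every unschedulable aperiodic task set graph has at least $n$ edges. The synthetic utilization bound of Abdelzaher, Sharma and Lu~\cite{ASL:04} has exactly the Liu--Layland form $n(2^{1/n}-1)$, and its worst case is a critical configuration in which all $n$ streams are simultaneously active and jointly just barely schedulable; dropping any stream (or zeroing its execution time) strictly lowers the synthetic utilization needed to provoke a miss. This is the step I expect to be the main obstacle, because unlike the periodic setting the active set changes over time, so I would have to invoke a critical-instant argument to identify a single time $t$ at which all $n$ streams are simultaneously active and the synthetic utilization attains the bound, and then argue that this common instant is the one that governs minimal unschedulability.

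With the $n$-edge lemma in hand, conclusions (C1) and (C2) fall exactly as in Theorem~\ref{th:main}. Conclusion (C1) fails because any $x' \in A$ must have at least $n$ edges, so no constant-size subgraph can lie in $A$ once $n$ grows. Conclusion (C2) fails because, even after $K$ edges are placed as favorably as possible (one edge to each of $K$ streams), forcing the remaining $n-K$ streams each to acquire an edge still depends on the total count of quanta $M > n$, so a constant number of preplaced edges cannot raise the probability of unschedulability by an additive constant $\delta = f(K)$. Since (A1) and (A3) hold but (C1) and (C2) both fail, the contrapositive of Bourgain's theorem forces (A2) to fail as well, so $p\,\frac{d\mu_p(A)}{dp}$ is unbounded and the synthetic utilization threshold is sharp.
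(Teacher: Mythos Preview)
Your overall strategy---mirror the proof of Theorem~\ref{th:main} via Bourgain's theorem, with deadline monotonic replacing rate monotonic and synthetic utilization replacing periodic utilization---is exactly what the paper does, and your treatment of the $n$-edge analog of Lemma~\ref{lem:n-edges} and the refutation of (C1) and (C2) is in fact more explicit than the paper's own one-paragraph sketch.

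There is, however, one discrepancy in the graph construction that matters. You label each vertex in $\mathcal{T}$ by its relative deadline $D_i$ alone. In the periodic model the period fully determines the arrival pattern (under synchronous release), so the period label suffices to make ``unschedulable'' a well-defined predicate on task set graphs. In the aperiodic model the deadline does \emph{not} determine arrivals, so a graph whose $\mathcal{T}$-vertices carry only $D_i$ does not encode enough information to decide whether the corresponding job streams miss a deadline; the monotone property $A$ is not well-posed on your graph. The paper's modification is precisely here: each vertex in $\mathcal{T}$ represents a stream with relative deadline $D_i$ \emph{together with} a sequence of arrival times for that stream (restricted to integers in $(0,T]$ so that $\mathcal{T}$ stays finite, with $n\to\infty$ as $T\to\infty$). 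Once the arrival sequence lives in the vertex label, (un)schedulability of a task set graph is determined, $A$ is monotone, and the rest of your argument---checking (A1), (A3), invoking the $n$-edge lemma to kill (C1) and (C2), then contraposing---goes through exactly as you wrote it. The critical-instant concern you flag is then absorbed into the vertex label rather than left as a separate step.
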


The proof for the existence of a sharp threshold for deadline monotonic scheduling of aperiodic jobs does not require much deviation from the proof of the existence of a sharp threshold for rate monotonic scheduling of periodic tasks. The only modification that is required is to replace the vertex partition $\mathcal{T}$ with vertices that abstract most characteristics of an aperiodic job stream. Let each vertex in this partition represent a stream with relative deadline $D_{i}$ and a sequence of arrival times for that stream of jobs. In our analysis of periodic tasks, the period of a task was sufficient information to associate with each vertex. If we limit the arrival times to be integers in the interval $(0, T]$ for some integer $T$, we have a finite number of such vertices in $\mathcal{T}$. As $T \rightarrow \infty$, the number of vertices in $\mathcal{T}$, $n \rightarrow \infty$. $n$ is the number of job streams and hence is the maximum number of active jobs at any time instant. We can then use the same mechanism as before, with the task set graph, to show that a sharp threshold must exist for deadline monotonic scheduling because deadline monotonic scheduling satisfies the monotonicity property. To further confirm this knowledge, we generated many random instances of the aperiodic task scheduling problem and determined if jobs missed their deadlines. For these experiments, we had a varying number of job streams with the inter-arrival time for each job stream being drawn from an exponential distribution. The maximum synthetic utilization contribution of any one job stream (the maximum value of $c_{i}/D_{i}$) was kept at $0.125$ to allow for a sufficient number of streams. This is a modest assumption given that we would like to demonstrate the use of sharp thresholds to control the power consumption of a web server dealing with many (100s to 1000s) small jobs.

\begin{figure}[htb]
	\centering
	\scalebox{0.33}{
		\includegraphics{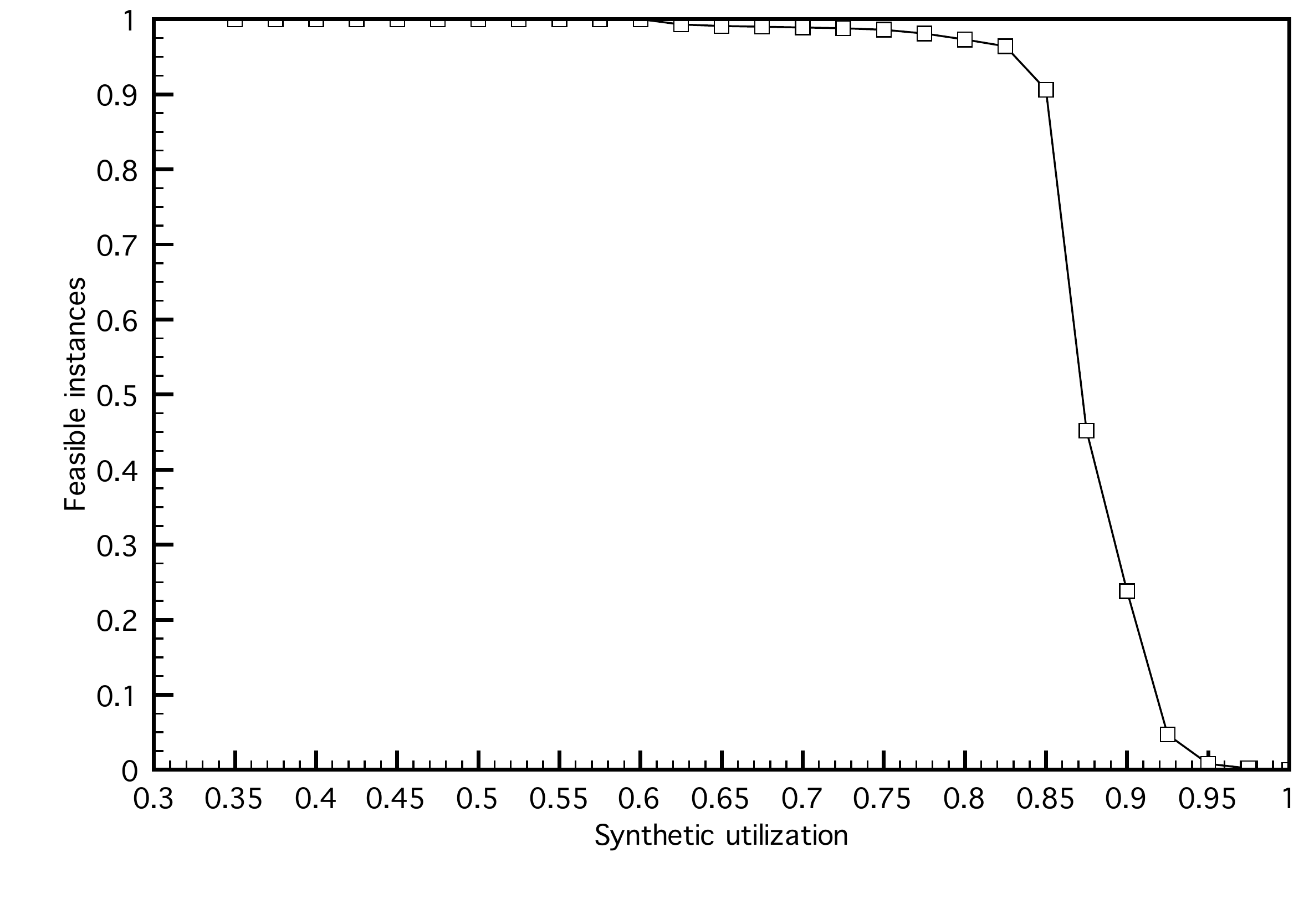}
	}
	\caption{Threshold for deadline monotonic scheduling of aperiodic tasks}
	\label{fig:aputil}
\end{figure}

The graph illustrating sharp threshold behavior for aperiodic task scheduling (Figure~\ref{fig:aputil}) indicates that the threshold for deadline monotonic scheduling may be close to $0.8$, which is substantially higher than a synthetic utilization bound of $0.586$ that can be obtained using worst-case analysis~\cite{AL:01,ASL:04}. By exploiting this difference between the average case and the worst case behavior of the deadline monotonic scheduling policy for aperiodic tasks, we can reduce power consumption for web servers without significant loss in temporal guarantees.

\subsection{Power control for web servers}

Many web services offer some delay bounds to clients as a part of the service level agreements; this is particularly true for services that require user fees. Moreover, web services offer multiple levels of service with better guarantees for premium customers. Synthetic utilization bounds are an effective mechanism to ensure that delay guarantees are met. Servers can use an admission control mechanism to ensure that they can limit the delay experienced by different clients. Alternatively, these bounds can be used to provision a web farm to ensure that all customer requirements can be met at low cost. 

Another application of such bounds is in operating power control. Most processors being manufactured today can operate at multiple clock speeds, with lower speeds consuming less power. Thus, utilization bounds can help in determining the ideal speed settings for processors such that delay bounds are not violated and power consumption is reduced. This approach was adopted by Sharma et al.~\cite{STASL:03}, and is illustrative of the use of synthetic utilization bounds. We will not stress the need for power control in server farms. The case has been made by many researchers including Sharma et al.~\cite{STASL:03}. {\em The only goal of this section of our article is to suggest that using synthetic utilization thresholds will improve power savings at the cost of a small fraction of deadline violations.} Sharma et al. used a synthetic utilization bound of $0.586$ for the web server, while we allow the web server to operate up to a synthetic utilization of $0.75$.

Tasks are scheduled using the deadline monotonic priority assignment, therefore different relative deadlines correspond to different service levels. We do not rewrite a web server like Apache to support multiple levels but, instead, run multiple instances of the Apache web server at different priority levels in the operating system\footnote{Most operating systems including Linux allow users to set static priorities for tasks. Within each priority level, tasks are scheduled FIFO by default.}, to provide service class differentiation. Our implementation is for the Linux operating system (Fedora Core 3; Linux kernel 2.6.9) and makes use of the TUX in-kernel web server~\cite{Tux} to integrate admission control, power control and scheduling.

\begin{figure}[htb]
	\centering
	\scalebox{0.45}{
		\includegraphics{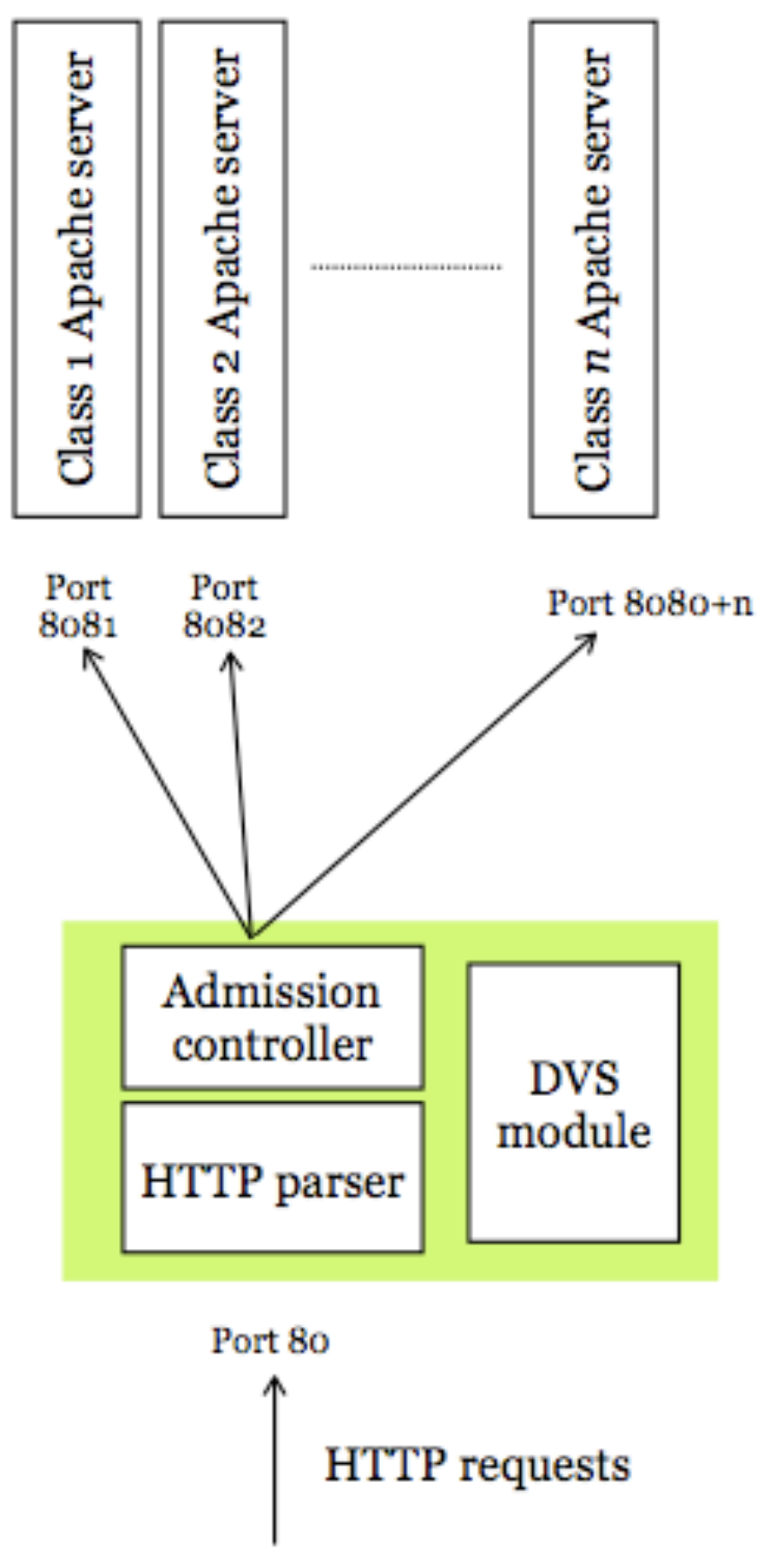}
	}
	\caption{A system architecture for web services}
	\label{fig:sysarch}
\end{figure}

All new HTTP session requests arrive and are processed by the TUX server. Based on the source of the request (or other meta information), a service level -- a delay guarantee, $D_{i}$ -- is assigned for the request. The service time, $c_{i}$, associated with a request is inferred from the content that is requested. If the new connection will not violate the synthetic utilization limit for the system (we chose $0.75$), the request is admitted. The service time for a request depends on the processor speed. If, at the current speed, the utilization limit is exceeded then the power control module uses dynamic voltage scaling to increase the processor speed and keep the synthetic utilization under the limit. When the processor is operating at the maximum speed, new HTTP connections may be rejected to keep the system operating under the set limit. Admitted sessions are handed off to the appropriate Apache server.

When a session terminates, it may be possible to reduce processor speed. We do not reduce the processor speed at once but wait for a predefined duration before making changes. This is to minimize overhead from rapid voltage changes. To keep track of the synthetic utilization after connections have been admitted, we make use of the Netfilter framework and some extra modules  that we implemented to track packets and identify HTTP traffic. The overall architecture of the web server platform (Figure~\ref{fig:sysarch}) is the same as the one used by Sharma et al.~\cite{STASL:03} and they have provided several implementation details that we do not discuss in this article but can be obtained from their report. The alterations we needed to make were only due to changes in the underlying platform.

We used an Intel Pentium M processor with enhanced SpeedStep capability and a maximum processing speed of 1.7 GHz. The \texttt{cpufreq} driver for enhanced SpeedStep allows us to control the operating speed. The TUX in-kernel web server is part of the Linux Fedora Core 3 distribution. In contrast, Sharma et al.~\cite{STASL:03} used an AMD Athlon processor with PowerNow DVS support. They also used Linux kernel 2.5, for which they needed to port \texttt{khttpd}, the in-kernel web server from Linux kernel 2.4.\footnote{There was a decision to remove the in-kernel web server between versions 2.4 and 2.5 of the Linux kernel, but the web server was brought back in to the 2.6 kernel by some distributions including Fedora.} The processor frequency and voltage settings for the processor we used are shown in Table~\ref{tab:dvs}.

\begin{table}[htb]
	\centering
	\begin{tabular}{ll}
		\hline 
		Frequency & Voltage \\
		MHz & Volts \\
		\hline
		600  & 0.956 \\
		800  & 1.004 \\
		1000 & 1.116 \\
		1200 & 1.228 \\
		1400 & 1.308 \\
		1700 & 1.484
	\end{tabular}
	\caption{Frequency and voltage settings/Intel Pentium M 1.7GHz with enhanced SpeedStep}
	\label{tab:dvs}
\end{table}

\begin{figure}[htb]
	\centering
	\scalebox{0.3}{
		\includegraphics{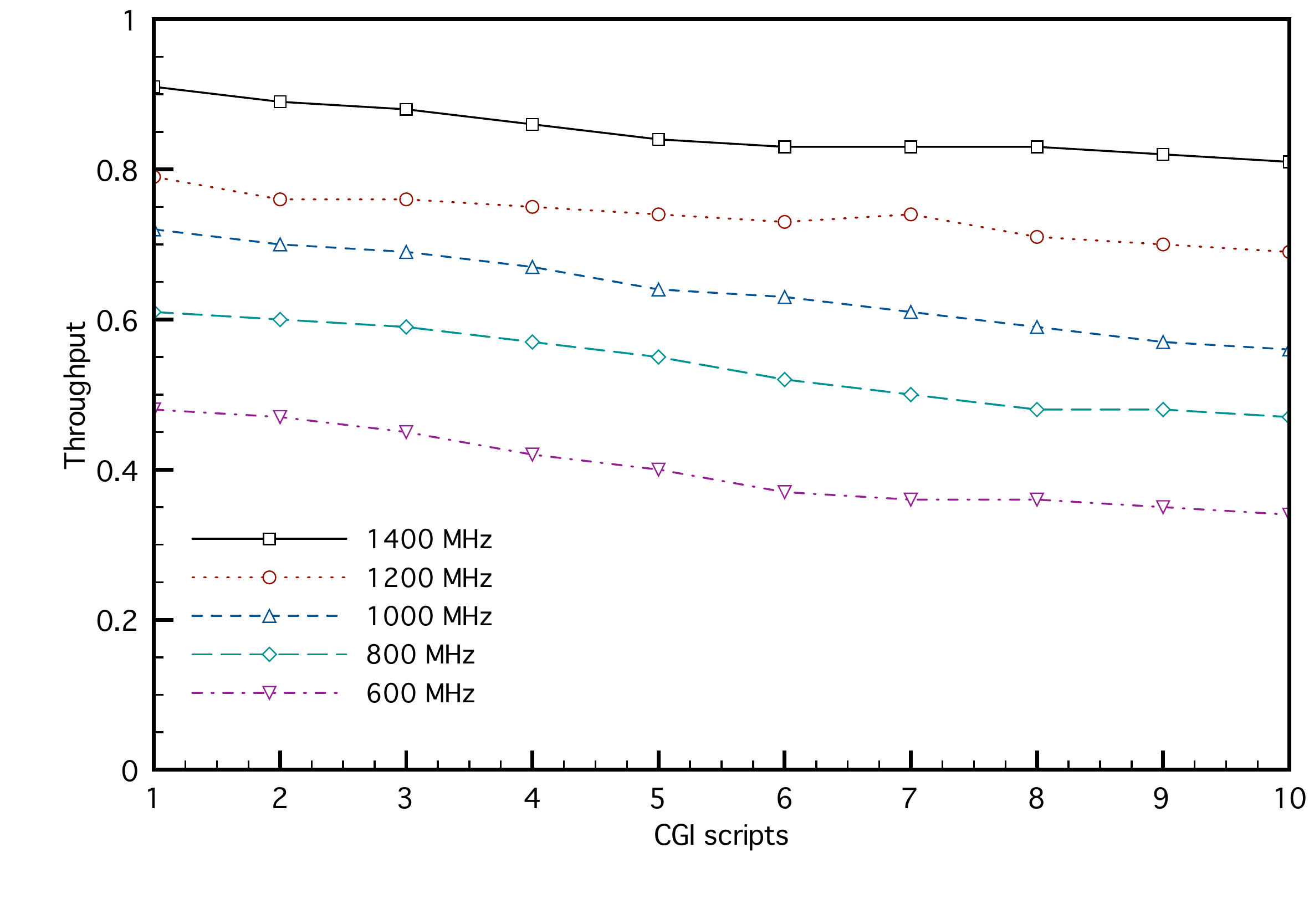}
	}
	\caption{Impact of processor speed on execution times}
	\label{fig:exectimes}
\end{figure}

The workload requested by different clients was composed of a set of CGI scripts that would be executed at the web server. We used 10 CGI scripts with varying degrees of computation. The execution time requirements of these scripts was determined by setting the processor speed at different levels and determining the maximum rate at which the processor could serve each CGI request. If, for example, at 1.7GHz, the server could handle 800 requests per second of script 1 alone, then the mean execution time of script 1 at this speed is $1/800 = 1.25ms$. We profiled the scripts at each of the six possible speed settings to determine the change in execution times with slowdown. This type of profiling helps us account for other execution sources of overhead including data I/O. The throughput slowdown (and hence the execution time increase) for each of the 10 scripts is illustrated (Figure~\ref{fig:exectimes}; the throughput at the highest speed is assumed to be $1$ and the throughput decrease at slower speeds is shown.)  The scripts to the right of the graph are computationally more demanding and we use the slowdown/speedup factors that correspond to these scripts when adjusting voltage levels.

To determine power savings, we used logs of session-oriented connections that were fed to {\tt httperf}~\cite{MJ:90} to generate workload for the web server from multiple clients. The workload we used contained 1000 persistent HTTP connections with random connection lengths chosen in the interval $[2,16]$. The requests could be for any of the profiled scripts and the inter-arrival time was drawn from an exponential distribution with different means. We created six Apache servers at different priority levels thus limiting the number of possible relative deadlines to six. 

There are two quantities of interest: the average power consumption and the fraction of deadlines missed. The average power consumption was obtained using a separate data acquisition system that measured the voltage drop across a sense resistor. For the same workload, we determined the average power consumption when the synthetic utilization set point was $0.586$ (the synthetic utilization bound) and $0.75$ (near the sharp threshold). It is clear that we can obtain power savings, and these savings are shown in Figure~\ref{fig:powersave}. The load (along the $x$-axis) is a fraction of the processor capacity based on the execution time profiling carried out earlier and the known inter-arrival times between HTTP requests. Increasing the load increases the maximum synthetic utilization. We varied the load from $0$ to $0.8$ and observed that we can save an additional 10-11\% energy by using a higher synthetic utilization set point. Using a set point of $0.586$, we noted slightly less than $1.7\%$ deadline misses and by raising the set point to $0.75$ we recorded $2.8\%$ deadline misses. Some deadline misses are inevitable, irrespective of the set point chosen unless we are overly conservative, because of variations in execution times and also depend on when exactly speed changes are performed. The encouraging result, however, is that we see greater power savings with a small penalty.

\begin{figure}[htb]
	\centering
	\scalebox{0.3}{
		\includegraphics{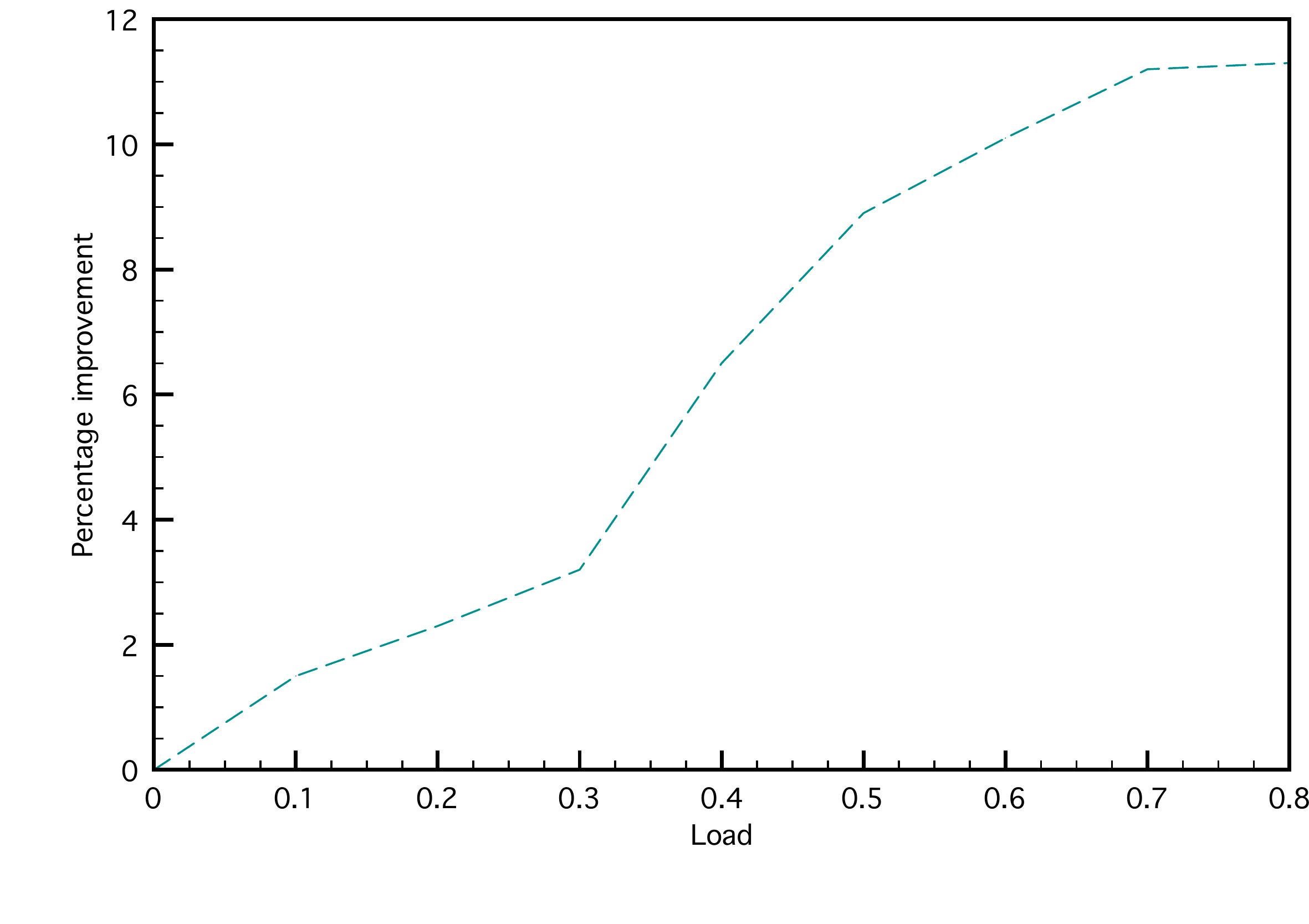}
	}
	\caption{Energy savings: sharp threshold vs. tight bound}
	\label{fig:powersave}
\end{figure}

By changing the synthetic utilization set-point from $0.586$ to $0.75$, a change of about $0.17$, we would expect to see power savings of that magnitude at moderate and high workload conditions. This does not happen because of the discrete frequency-voltage settings, which often forces the processor to operate at higher speeds. Yet another question involves the selection of the synthetic utilization set point. From earlier experiments (Figure~\ref{fig:aputil}), we could have picked a higher set point, say $0.80$. Even in the earlier experiments, the probability of missing a deadline is higher for a synthetic utilization of $0.80$, and if we do use this for the web server system the percentage of jobs missing their deadlines increased to $9\%$, which is significantly high.

The essential takeaway from this section is that the existence of sharp thresholds allows us to improve the management of computer systems. With web servers, we can either reduce the energy costs or (quite naturally) deal with additional workload with existing infrastructure.

\section{Related work}
\label{sec:relatedwork}

In our work, we explore some interesting aspects surrounding task set utilization and schedulability for real-time systems. There has been extensive work on deriving utilization bounds for periodic task systems starting with the work of Liu and Layland~\cite{LL:73}. Kuo and Mok~\cite{KM:91} made significant improvements on Liu and Layland's bound for rate monotonic scheduling by showing that schedulability is a function, not of the number of individual tasks but, of the number of harmonic chains. Bini, Buttazzo and Buttazzo~\cite{Bini:03} have shown, using the {hyperbolic bound}, that the feasible region for schedulability using the rate monotonic scheduling policy can be larger if the product of individual task utilizations (and not their sum) is bounded. Wu, Liu and Zhao used techniques inspired by network calculus to derive schedulability bounds~\cite{WLZ:05} for static priority scheduling. Their contribution is an alternative framework for deriving utilization bounds.

Our work presents a fresh perspective on scheduling for real-time systems. Only Lehoczky, Sha and Ding~\cite{LSD:89} have attempted to obtain average-case results. For rate monotonic scheduling, they characterized the breakdown utilization of the rate monotonic policy for the Liu and Layland model of real-time tasks as $0.88$. Breakdown utilization, however, is not the same as a utilization threshold, and the connection between the two needs to be examined more closely. The methodology we employ in obtaining our results is new and extremely general. It was not possible to reason in a rather abstract sense about the average-case behavior of scheduling policies with the more traditional analysis techniques of time demand and resource supply. Furthermore, our abstraction allows for reasoning about multi-stage and multiprocessor systems. Dutertre~\cite{Dutertre:02} identified phase transitions in a non-preemptive recurring task scheduling problem. While Dutertre's work emphasized the empirical evidence for sharp thresholds, we have provided the mathematical basis for the existence of sharp thresholds.

Lehoczky pioneered the use of real-time queueing theory to predict the behavior of real-time scheduling policies -- specifically the earliest deadline first policy -- under heavy traffic conditions with stochastic workload~\cite{Lehoczky:96,Lehoczky:97}. RTQT uses powerful tools to determine deadline miss percentages in end-to-end tasks executing on a resource pipeline. We may be able to use RTQT to predict the extent to which deadlines can be missed when a task set has utilization close to the threshold, but that requires extensive study, especially to extend RTQT to static priority policies.

In the realm of aperiodic task sets, great progress has been made recently, by Abdelzaher et al., with the identification of aperiodic schedulability bounds for static priority scheduling~\cite{ASL:04}. The initial result obtained by Abdelzaher and Lu~\cite{AL:01} was a constant time utilization-based test for a set of aperiodic tasks. The original analysis has been extended to deal with end-to-end schedulability for multi-stage resource pipelines~\cite{ATL:04}. It has also been shown that such analysis can be used to obtain non-utilization bounds for schedulability with static priority policies~\cite{LA:06}. In this article, we have studied single-node thresholds for the aperiodic task model. In the future, we will further the ideas described in this article to include resource pipelines and non-utilization metrics.

For the specific application of power control in web servers and web server clusters, there has been recent work by Bertini, Leite and Moss{\'e}~\cite{BLM:07}, and Horvath, Abdelzaher and Skadron~\cite{HAK:07}; we believe that the ideas proposed here can easily be integrated into these resource management solutions.

Sharp thresholds are indicators of phase transitions. Phase transitions are common in physical systems. Freezing of ice and superconductivity are phenomena that have temperature as the critical parameter. Phase transitions have been identified in many combinatorial optimization problems, especially constraint satisfaction problems~\cite{CKT:91,MSL:92,KS:94}. Phase transitions provide very interesting insight into the behavior of combinatorial optimization problems, of which scheduling is an instance, and mayhold the key to faster, near-optimal solutions. Sharp thresholds for properties of random graphs were identified initially by Erd{\"o}s and R{\'e}nyi~\cite{ER:60} and these results have been generalized by many mathematicians including Friedgut and Kalai~\cite{FK:96, Friedgut:99}.

\section{Conclusions}
\label{sec:conclusions}

The search for efficient tests for schedulability has been at the center of real-time systems research. We have generalized the use of utilization as a schedulability metric. By identifying the sharp threshold behavior of scheduling policies with respect to utilization, we provide a new test for schedulability. Schedulability tests using utilization thresholds are well-suited for soft real-time systems. For hard real-time systems these tests can be backed up by exact tests; thresholds can be used to perform initial filtering before using exact tests.

Most scheduling policies can be shown to have sharp thresholds. We have introduced the task set graph abstraction that can be used to argue about the average case behavior of policies irrespective of whether the workload is periodic or aperiodic. This abstraction is powerful enough to reason about uniprocessor scheduling, and we expect to apply the same ideas to multiprocessor and multistage scheduling problems, and a variety of policies although we considered only the rate and deadline monotonic priority policies in this paper. Interestingly, we have been able to use these thresholds to improve the energy efficiency of delay-sensitive web servers.

Our approach to dealing with average or typical case behavior of scheduling policies makes interesting connections with results from percolation theory and random graphs. We hope to explore these links further to fully characterize the performance of scheduling policies. So far, we have been able to make some qualitative statements about scheduling policies but the ability to compare policies, which we have not explored with this framework, will enrich the graph-theoretic approach.

There are several related open problems. The first of these is the determination of the threshold for a policy without having to resort to experiments. Related to this is the secondary issue of determining the width of the threshold interval. The analysis is complex because of the time demand function that is needed to evaluate the completion time of a task. In a strictly periodic setting with rate monotonic scheduling, the completion time of a task $\tau_{i}$, $L$, is obtaining by fixed point iteration. \[ L = \sum_{i=1}^{i} \lceil \frac{L}{P_{i}} \rceil c_{i}, \] where the summation is taken over all tasks with priorities greater than or equal to the task $\tau_{i}$. $L \le P_{i}$ is necessary and sufficient for $\tau_{i}$ to meet its deadline. We believe that developing some normal approximations will provide a better understanding of the threshold for the rate monotonic policy, as well as other policies. Another useful result would be a measure of the worst-case tardiness over all possible task sets when the utilization is known. When the utilization is less than the Liu and Layland bound~\cite{LL:73}, the tardiness is always zero but little is known about the worst possible tardiness for arbitrary utilization factors.


\bibliographystyle{acm}
\bibliography{../../Bibliography/CompleteBibliography}


\end{document}